\documentclass[conference]{IEEEtran}
% draftclsnofoot is like draftcls, but doesnot  display  the  word  “DRAFT”  along  with  the  date  at thefoot of each page
% to put equations at the bottom a page include package stfloats and put equation in figure environment with * and [b]

\usepackage{diagbox}
\usepackage{tikz}
\usepackage{stfloats} 
\UseRawInputEncoding
\usepackage{cite}
\usepackage{amsmath}
\usepackage{amsfonts}
\usepackage{graphicx}
\usepackage{float}

\usepackage[table]{xcolor}
\definecolor{mygreen}{RGB}{0,150,30}
\usepackage{epsfig}
\usepackage{amsthm}
\usepackage{mathrsfs}
\usepackage{amsfonts}
\usepackage{amssymb}
\usepackage{mathrsfs}
\usepackage{euscript}
\usepackage{multirow}
\usepackage{url}
\usepackage{subcaption}
\captionsetup{font=footnotesize}
\usepackage{algorithm}
\usepackage{algorithmic}
\usepackage{bm}
\usepackage{mathtools} % for underbracket

% Following 4 lines for >< for hypothesis test
\def\LRT#1#2{\!
\raisebox{.2ex}{$
{{\scriptstyle\;#1}\atop{\displaystyle\gtrless}}
\atop
{\raisebox{-1.25ex}{$\scriptstyle\;#2$}}
$}
\!}

\usepackage{xcolor, soul}
\sethlcolor{green}
\def\BibTeX{{\rm B\kern-.05em{\sc i\kern-.025em b}\kern-.08em
    T\kern-.1667em\lower.7ex\hbox{E}\kern-.125emX}}
%    \pdfpageheight=297mm
%\pdfpagewidth=210mm
%\pdfhorigin=\dimexpr1in+(210mm-8.5in)/2\relax
%\pdfvorigin=\dimexpr1in+(297mm-11in)/2\relax

\theoremstyle{remark}
\theoremstyle{definition}\newtheorem{theorem}{Theorem}

\newtheorem*{remark*}{Remark}

\let\oldproofname=\proofname % this is for bold proof, it needs \usepackage{amsthm}
\renewcommand{\proofname}{\rm\bf{\oldproofname}}

%\makeatletter
%\g@addto@macro{\thm@space@setup}{\thm@headpunct{:}}
%\makeatother

%%%%%Reduce spacings Start%%%%%%%%%%%%%
\setlength{\abovecaptionskip}{1pt plus 1pt minus 5pt}
%%%fig at the top of a page: caption-text gap
\setlength{\textfloatsep}{0.5\baselineskip plus 0.1\baselineskip minus
1.4\baselineskip}
%%%gap between two figures:
\setlength{\floatsep}{0.5\baselineskip plus  0.1\baselineskip minus
9.4\baselineskip} 
 
\setlength{\abovecaptionskip}{1pt plus 1pt minus 5pt}
%%%fig at the top of a page: caption-text gap
\setlength{\textfloatsep}{0.5\baselineskip plus 0.1\baselineskip minus
1.4\baselineskip}
%%%gap between two figures:
\setlength{\floatsep}{0.5\baselineskip plus  0.1\baselineskip minus
9.4\baselineskip} % 
%%for fig in the middle of text: text-figtop and caption-text gaps
\setlength{\intextsep}{0.1\baselineskip plus 0.1\baselineskip minus
0.4\baselineskip}

\abovedisplayskip=1.5ex plus 4pt minus 4pt %1.5ex plus 4pt minus 2pt
\belowdisplayskip=\abovedisplayskip
\abovedisplayshortskip=0pt plus 4pt %0pt plus 4pt
\belowdisplayshortskip=1.5ex plus 4pt minus 4pt %1.5ex plus 4pt minus 2pt

%%%%%%%%%Reduce spacings End%%%%%%%%%%%%%%%%

\addtolength{\topmargin}{0.05in}

\newcommand\submittedtext{%
  \footnotesize This work has been submitted to the IEEE for possible publication. 
  Copyright may be transferred without notice, after which this version may no longer be accessible.}

% --- The overlay box definition ---
\newcommand\submittednotice{%
\begin{tikzpicture}[remember picture,overlay]
  \node[anchor=north,yshift=-10pt] at (current page.north)
    {\fcolorbox{red}{white}{%
      \parbox{\dimexpr0.9\textwidth-2\fboxsep-2\fboxrule}{%
    \raggedright\submittedtext}}};
\end{tikzpicture}%
}

% --- Show the notice only on the first page ---
\AddToHook{shipout/firstpage}{\submittednotice}

\begin{document}

\bstctlcite{IEEEexample:BSTcontrol}

\title{Target Detection with Tightly-coupled Antennas: Analysis for Unknown Wideband Signals}
\author{Erfan Khordad*, Peter J. Smith$^\dagger$, Sachitha C. Bandara*, Rajitha Senanayake* and Robert W. Heath Jr$^\dagger$$^\dagger$\\
*the Department of Electrical and Electronic Engineering, University of Melbourne, Melbourne, Australia\\
$\dagger$ School of Mathematics and Statistics,
Victoria University of Wellington, Wellington, New Zealand \\
$\dagger \dagger$ Department of Electrical and Computer Engineering, University of California at San Diego, San Diego, USA}
\maketitle
%\markboth{IEEE COMMUNICATIONS LETTERS,~Vol.~, No.~, ...~2018}%$^{\dag}$
%{Shell \MakeLowercase{\textit{et al.}}: Bare Demo of IEEEtran.cls for Journals}
%\IEEEoverridecommandlockouts
%\IEEEpubid{\makebox[\columnwidth]{978-1-5386-5541-2/18/\$31.00~\copyright2019 IEEE \hfill} \hspace{\columnsep}\makebox[\columnwidth]{ }}

% \markboth{submitted to ??? on \today}
% \IEEEoverridecommandlockouts

\begin{abstract}
This paper presents analysis for target detection using tightly-coupled antenna (TCA) arrays with high mutual coupling (MC). We show that the wide operational bandwidth of TCAs is advantageous for target detection. We assume a sensing receiver equipped with a TCA array that collects joint time and frequency samples of the target's echo signals. Echoes are assumed to be unknown wideband signals, and noise at the TCA array follows a frequency-varying correlation model due to MC. We also assume that the echo signals are time varying, with no assumption on the temporal variation. We consider three regimes in frequency as constant, slowly or rapidly varying, to capture all possible spectral dynamics of the echoes. We propose a novel detector for the slowly-varying regime, and derive detectors based on maximum likelihood estimation (MLE) for the other regimes. For the rapidly-varying regime, we derive an extended energy detector for correlated noise with frequency and time samples. We analyze the performance of all the detectors. We also derive and analyze an ideal detector giving an upper bound on performance. We validate our analysis with simulations and demonstrate that our proposed detector outperforms the MLE-based detectors in terms of robustness to frequency variation. Also, we highlight that TCA arrays offer clear advantages over weakly-coupled antenna arrays in target detection.

% Our derived detector for the constant regime can be used when the frequency samples are close enough such that the signal is essentially constant in frequency
\end{abstract}

\begin{IEEEkeywords}
detection, mutual coupling, noise correlation.
\end{IEEEkeywords}

\section{Introduction}\label{IntroductionLABEL}

For beyond-5G wireless communications systems, using a substantial number of antennas in an array is imperative to achieve high spectral efficiency. Hence, antennas are packed together, resulting in antenna spacing below half-wavelength, and therefore mutual coupling (MC) in such arrays is no longer negligible\cite{SWmassiveMIMO,Bandara_2025_multiuser}. Ultra-dense arrangements of the antennas can also form quasi-continuous arrays \cite{RicianContinuousAntenna,MUConAper10158997,Continuous10570543}.

MC has been widely regarded as a cause of performance degradation and its effect has been mitigated, e.g. \cite{Jiongpei450}. However, the pioneering works such as \cite{Munk2006,MunkBook2003} showed that high MC is advantageous. Tightly-coupled antennas (TCAs) with high MC yield a broad operational bandwidth, at the expense of increased design complexity. The wideband gain was shown by implementations \cite{PracticalWidebandObserved,Cavallo6374210} and physically-consistent modeling \cite{SWmassiveMIMO,SachithaPaper,Bandara_2025_multiuser}. Physically-consistent models use circuit theory to link physics with communication theory\cite{CircuitTheory12}. 

% TCAs provide such a broad bandwidth in the end-fire direction and lose antenna gain at broadside.

% , yet the benefits of MC have not been extensively studied.

Target detection is one of the areas that can significantly benefit from the merits of TCAs, as we show in this paper. Additionally, enhanced target detection using TCAs can also be beneficial for integrated sensing and communications (ISAC). Target detection was conducted for ISAC setups, e.g. in \cite{MultiTargetMultiU045,clutter890}, however without exploring the advantages of high MC. 

Target detection, as an independent problem, was studied for a range of system models, for example in \cite{RadarUnknownMutual12} by assuming unknown MC, yet considering MC as an adverse factor. Target detection with unknown wideband signals was performed using multiple energy detectors in sub-bands \cite{MultipleNarrowB_ED}. However, energy detection is susceptible to noise uncertainty and is sub-optimal for correlated signals \cite{ED_notOptimal}. Wideband detection was also considered in \cite{TimeReversal320} and in \cite{ED_notOptimal,Peng11095} detection was investigated by taking into account signal correlation models. For the multiple-antenna receiver in \cite{ED_notOptimal} only independent and identically distributed noise was considered. Also a single antenna receiver was considered in \cite{Peng11095} along with the presentation of asymptotic results. Furthermore, \cite{Peng11095,TimeReversal320} focused solely on white noise. All these methods performed detection in one domain. Target detection using time and frequency samples was performed in \cite{Kalra8635078,Javed5709749,Sattar892667}, however without detection performance analysis. Despite the above efforts, the merits of TCAs in enhancing target detection are still unexplored.

In this paper, we show the detection gains resulting from employing TCAs. We assume that the signals reflected from the target are unknown and wideband. Thus our approach is applicable to any wideband waveform and can be utilized in both communications-centric and radar-centric ISAC waveform designs. We assume time-varying echo signals that fall into three regimes in the frequency domain as constant, slowly or rapidly varying, with no specific assumption on the rate of time variation. Also, we factor in the frequency-varying noise correlation caused by MC. We propose a novel detector for the slowly-varying regime and derive MLE-based detectors for other regimes. Our detector for the rapidly-varying regime is an extended energy detector for frequency-varying correlated noise, which collects time and frequency samples. 
% Also, our derived detector for the constant regime can be used when the frequency samples are close enough such that the signal is essentially constant in frequency.
We further derive a detector that gives an upper bound on detection performance. We present rigorous performance analysis by calculating the distribution of all the detectors under null and alternative hypotheses. We show perfect accuracy of our analysis by simulations and demonstrate that our detector has greater robustness than MLE-based detectors. We also show that TCA arrays exhibit clear benefits over conventional weakly-coupled antenna arrays in target detection.

\section{System Model and Problem Formulation}\label{Sec:SystemModel}

% Our analysis in this paper is generic and could be utilized for other detection problems with joint time and frequency sampling along with a frequency-varying noise correlation model. However, our motivation is demonstrating the opportunities that TCAs open in target detection. Hence, 
Following \cite{SWmassiveMIMO}, we present a physically-consistent system model using circuit theory, assuming canonical minimum scattering (CMS) antennas, which are suitable for wideband applications, to model TCAs\cite{CMS_Antennas1965}. A CMS antenna has an equivalent resistor-inductor-capacitor (RLC) circuit, based on Chu's theory. We call a CMS antenna modeled based on Chu's theory a Chu's CMS antenna. The radius of a sphere that encloses the Chu's CMS antenna is denoted by $a_R$. Impedance parameters of the RLC circuit are a function of $a_R$.

Equivalence between Hertz dipoles with Chu's CMS antennas, using radiated power, gives an expression for the mutual impedance of two Chu's CMS antennas\cite[Eq. (5)]{SWmassiveMIMO}. This expression considers the antenna spacing denoted by $\delta$ and the arrangements of the dipoles. We use a colinear arrangement in which the dipoles are placed along a straight line\cite{SWmassiveMIMO,SachithaPaper}.

We consider a transmitter that transmits wideband signals. These signals reflect from a target and the echoes return to a sensing receiver with a uniform linear array modeled by Chu's CMS antennas. We assume small antenna spacing to have a TCA array. Wideband signals are used with the purpose of using the broad operational bandwidth of TCAs for sensing. We assume echoes returning to the sensing receiver are unknown and wideband. Thus, our approach is valid for any wideband waveform.

We denote the number of antennas at the sensing receiver by $N_R$. Since antennas at the sensing receiver are modeled as Chu's CMS antennas, the antenna array at the sensing receiver is modeled by RLC circuits. Thus, the signal received at the sensing receiver is represented by a voltage vector denoted by $\mathbf{v}_{k,t}$, with $k$ and $t$ being the $k$th frequency sample and $t$th time sample respectively. Define $K$ and $T$ as the total number of frequency samples and the total number of time domain samples respectively. The sensing receiver collects both time and frequency samples to detect the target. Thus, for $1 \!\leq \! k \! \leq \! K$ and $1 \!\leq \! t \! \leq \! T$, we present the detection problem with two hypotheses
\begin{equation}\label{Eq:originalDetProblem}
\begin{split}
    \mathcal{H}_0&: \mathbf v_{k,t} = \mathbf n_{k,t}\\    
    \mathcal{H}_1&: \mathbf v_{k,t} = \bm \alpha_{k,t}+\mathbf n_{k,t},
\end{split}
\end{equation}
where $\mathcal{H}_0$ and $\mathcal{H}_1$ represent the null and alternative hypotheses respectively, and vector $\bm \alpha_{k,t}$ indicates an unknown signal affected by the channel from the target to the sensing receiver. In Section \ref{Sec:NumericalResults}, a physically-consistent channel model is adopted. We assume that the signal $\bm\alpha_{k,t}$ is time varying and falls into three regimes in the frequency domain as constant, slowly or rapidly varying. We impose no prior assumption on the rate of time variation for $\bm\alpha_{k,t}$. In the absence of any knowledge about time variation, we cannot derive detectors built on temporal properties and treat each time sample separately. Also, $\mathbf n_{k,t}$ in \eqref{Eq:originalDetProblem} indicates the noise vector with a complex correlated Gaussian distribution with complex covariance matrix, $\mathbf{R}_k$, which is frequency dependent and derived in \cite[Eq. (16b)]{SWmassiveMIMO}. We assume uncorrelated noise in the time domain. Since both the channel and $\mathbf R_k$ model the effects of $a_R$, $\delta$ and MC, the problem in \eqref{Eq:originalDetProblem} is a physically-consistent detection problem.

% We now explain how probabilities of false alarm and detection are calculated for a detector, which is used in Section \ref{Sec:NumericalResults} to show detection performance.
% % We assume that distributions of the detector given the null and alternative hypotheses are determined. 
% Denote by $\Lambda$ a detector and by $\gamma_g$ a general test threshold for $\Lambda$. In addition, let the probability density function (PDF) and cumulative distribution function (CDF) of detector $\Lambda$ given the null hypothesis be $p(\Lambda|\mathcal{H}_0)$ and $F_{\Lambda|\mathcal{H}_0}(x)$ respectively. Also, let the PDF and CDF of detector $\Lambda$ given the alternative hypothesis be $p(\Lambda|\mathcal{H}_1)$ and $F_{\Lambda|\mathcal{H}_1}(x)$ respectively. We compute the probability of false alarm as $P_{FA}=\int_{\gamma_g}^\infty p(\Lambda|\mathcal{H}_0) d \Lambda$, which gives the test threshold as $\gamma_g=F_{\Lambda|\mathcal{H}_0}^{-1}(1-P_{FA})$ for a constant $P_{FA}$. We also calculate the probability of detection as $P_{D}=\int_{\gamma_g}^\infty p(\Lambda|\mathcal{H}_1) d \Lambda$ and thus $P_{D} = 1-F_{\Lambda|\mathcal{H}_1}(\gamma_g)$.

% In what follows, firstly we derive a general detection framework that uses joint time and frequency samples, with no prior assumption on the spectral dynamics of the signal. We then present our proposed approach and analysis.

% $p(\mathbf{V}|\bm \alpha_{k,t}^\text{MLE},\mathcal{H}_1)/p(\mathbf{V}|\mathcal{H}_0)\LRT{\mathcal{H}_1}{\mathcal{H}_0} \!\!\gamma\!$ where $\bm \alpha_{k,t}^\text{MLE}$

We derive a general detection framework next. We will use it in later sections where we utilize estimates of the signal to derive detectors. We impose no assumption on the dynamics of the signal here. Let $\mathbf V$ be an array that has all collected samples, $\mathbf{v}_{k,t}$, in time and frequency. Using the PDF for a correlated Gaussian distribution and denoting the test threshold by $\gamma$, the likelihood ratio test for (\ref{Eq:originalDetProblem}) is written as 
\begin{equation}\label{Eq:LRT_movingAve1}
\Gamma_{\!\text{LRT}} (\mathbf{V}) \!\!=\!\!\frac{\exp \!\!\left(\!\sum_{k=1}^K \!\! \sum_{t=1}^T \!\! -(\!\mathbf{v}_{k,t} \!-\! \bm{\alpha}_{k,t}\!)^H \!\mathbf{R}^{-1}_k  (\!\mathbf{v}_{k,t} \!\! - \! \bm{\alpha}_{k,t}\!)\!\!\right)}{ \exp\!\! \left(\sum_{k=1}^K  \sum_{t=1}^T \! -\mathbf{v}_{k,t}^H \mathbf{R}^{-1}_k  \mathbf{v}_{k,t}\right)}\!\! \LRT{\mathcal{H}_1}{\mathcal{H}_0} \!\!\gamma\!,
\end{equation}
where $H$ denotes the conjugate transpose. Using (\ref{Eq:LRT_movingAve1}), the log-likelihood ratio (LLR) with a new threshold $\gamma_1$ is given as
\begin{equation}\label{Eq:logLikelihoodForProposedDet}
\begin{split}
&\Lambda_\text{LLR} (\mathbf{V}) \\&\!\!= \!\!\sum_{k=1}^K \!\! \sum_{t=1}^T \!\! \left(\!\mathbf{v}^H_{k,t}\mathbf{R}^{\!-1}_k\!\bm{\alpha}_{k,t}\!+\!\bm{\alpha}^H_{k,t}\mathbf{R}^{\!-1}_k\mathbf{v}_{k,t}
\!-\!\bm\alpha^H_{k,t}\mathbf{R}^{\!-1}_k\!\bm\alpha_{k,t} \! \right) \!\!\!\LRT{\mathcal{H}_1}{\mathcal{H}_0} \!\!\gamma_1\!.
\end{split}
\end{equation}
Next, we present our proposed approach and analysis.

\section{Moving Average Detector For a Slowly-Varying Signal in the Frequency Domain}\label{Sec:ProposedForAtleastOneSlow2}

We propose a novel detector for a time-varying signal that changes slowly over frequency, which we call the slowly-varying regime. The signal in \eqref{Eq:originalDetProblem} is unknown. Thus, a generalized likelihood ratio test (GLRT) can be utilized, where the signal in \eqref{Eq:logLikelihoodForProposedDet} is substituted with its MLE. In Section \ref{Sec:SignalWithoutSlowVar}, we will use the MLEs for the constant or rapidly-varying regimes. 

However, we take a different approach for the slowly-varying regime in this section, to exploit the constrained understanding about the slow variation of the signal. Our proposed approach is based on a moving average (MA). We use a window that slides in frequency to estimate the signal in \eqref{Eq:logLikelihoodForProposedDet} by uniformly averaging the received samples or voltage vectors that are within the window. Generally, the averaging can be performed non-uniformly but in this work we utilize uniform averaging. We show that the outputs of our MA detector under the null and alternative hypotheses are specific forms of a generalized chi-square variable. In Section \ref{Sec:NumericalResults}, we demonstrate the superior performance of our MA detector.

We now proceed to explain our MA approach in detail. We estimate signal $\bm\alpha_{k,t}$ in \eqref{Eq:logLikelihoodForProposedDet} with $\bm {\alpha}_{k,t}^\text{MA}$, where the superscript MA indicates our MA approach. We define $\bm {\alpha}_{k,t}^\text{MA}$ for frequency sample $k$ as a weighted average of voltage vectors inside a window, in the frequency domain, that covers $\mathbf{v}_{k,t}$ and its surrounding samples. Denote by $w_{k,\ell}$ the weights for the weighted average. Mathematically, we write $\bm {\alpha}_{k,t}^\text{MA}$ as
\begin{equation}\label{Eq:windowingFormulaMA1}
\bm {\alpha}_{k,t}^\text{MA}=\sum_{{\ell}=-{L}}^{L} \mathbf{v}_{{k-\ell},t} w_{k,\ell},   
\end{equation}
where $L$ determines the window length which is $2L+1$. We assume $L$ is even and the window length is smaller than total number of frequency samples $K$. The $k$-th frequency sample in the window determines the center of the window with $\ell=0$. 
% It can be observed in \eqref{Eq:windowingFormulaMA1} that by changing the frequency index $k$ the window moves through frequency samples.

We use uniform averaging to compute the weights, however the estimation in \eqref{Eq:windowingFormulaMA1} is general and non-uniform weights can be used. Weights $w_{k,\ell}$ are set to zero for $k-\ell<1$ or $k-\ell>K$ to mitigate edge effects. In these edge cases, the averaging applies to frequency samples in the range $1\leq k \leq K$. Also, when $\ell<-L$ or $\ell>L$ weights $w_{k,\ell}$ are set to zero. Due to the edge effects, there are three constraints for the weights, and we calculate them as follows:
\begin{equation}\label{Eq:Three_casesForWin}
w_{k,\ell}=
\begin{cases}
1/(2L+1) &k\geq L+1, k\leq K-L\\
1/(k+L) &k\leq L \\
1/(K-k+1+L) &k\geq K-L+1.
\end{cases}
\end{equation}
We replace $\bm\alpha_{k,t}$ in \eqref{Eq:logLikelihoodForProposedDet} with $\bm\alpha_{k,t}^\text{MA}$ given in (\ref{Eq:windowingFormulaMA1}) to write our MA detector for the slowly-varying regime as
\begin{equation}\label{Eq:DetectorForCase(2,3)FirstEq}
\begin{split}
&\Lambda_\text{MA} (\mathbf{V}) =\sum_{t=1}^{T} \Bigg[\sum_{k=1}^{K} \sum_{\ell=-{L}}^{L} \mathbf v_{k,t}^H \mathbf{R}^{-1}_k  \mathbf{v}_{k-\ell,t} w_{k,\ell} \\
&+ \sum_{k=1}^{K} \Bigg(\sum_{\ell=-L}^{L} w_{k, \ell} \mathbf{v}_{k-\ell,t}^H\Bigg) \mathbf{R}^{-1}_k \mathbf v_{k,t} 
\\ & -\!\!\! \sum_{k=1}^{K} \!\!\Bigg(\!\sum_{\ell=-L}^{L} \!\! w_{k,\ell} \mathbf{v}_{k-\ell,t}^H\Bigg) \! \mathbf{R}^{-1}_k \! \Bigg(\!\sum_{{\ell'}=-{L}}^{L} \!\!\! \mathbf{v}_{k-{\ell'},t} w_{k,{\ell'}} \Bigg) \Bigg],
\end{split}
\end{equation}
where the subscript MA indicates our MA detector.
After some algebra, we write $\Lambda_\text{MA} (\mathbf{V})$ in (\ref{Eq:DetectorForCase(2,3)FirstEq}) in a quadratic form as 
\begin{equation}\label{Eq:logLikelihoodMatrixFormMAD1}
\Lambda_\text{MA} (\mathbf{V}) \!= \!\!\sum_{t=1}^T \!\!
\begin{bmatrix}
\mathbf v_{1,t}^H \, \mathbf v_{2,t}^H \, ... \, \mathbf v_{K,t}^H 
\end{bmatrix}
\!\! \mathbf{A} \!\!
\begin{bmatrix}
\mathbf v^T_{1,t} \, \mathbf v^T_{2,t} ... \, \mathbf v^T_{K,t}\!
\end{bmatrix}^T\!\!,
\end{equation}
where $\mathbf{A}_{KN_R\times KN_R}$ is a matrix that has sub-matrices $\mathbf{A}_{rs}$ with dimensions $N_R$ by $N_R$, and indices $r$ and $s$ are in the ranges $1 \leq r \leq K$ and $1 \leq s \leq K$. Also, $(.)^T$ refers to the transpose operation. We calculate the sub-matrices as 
\begin{equation}\label{Eq:ExpressionOfArsForCase(2,3)}
\begin{split}
\mathbf{A}_{rs}\!= \!\!\mathbf{R}^{-1}_r w_{r,r-s}\!\! +\! w_{s,s-r} \mathbf{R}^{-1}_s
\!\!-\! \!\sum_{k=1}^K \!w_{k,k-r} \mathbf{R}^{-1}_k w_{k,k-s}.
\end{split}
\end{equation}

We now analyze our MA detector in the following theorem.

\begin{theorem}\label{Theorem:TheoremForMAD1}
The detector $\Lambda_\text{MA}(\mathbf{V})$ in (\ref{Eq:logLikelihoodMatrixFormMAD1}) under hypotheses $\mathcal{H}_1$ and $\mathcal{H}_0$ is a linear combination of independent non-central chi-squares and a linear combination of independent chi-squares respectively, as follows  
\begin{equation}\label{Eq:InTheoremMAD1ForPD}
\Lambda_\text{MA}|\mathcal{H}_1= \sum_{i=1}^{KN_R} \lambda_i \,\, \chi^2_{2T,i}\bigg(\sum_{t=1}^T \operatorname{Re}\{\mu_{i,t}\}^2+\operatorname{Im}\{\mu_{i,t}\}^2\bigg),
\end{equation}
\begin{equation}\label{Eq:InTheoremMAD1ForP_FA}
\Lambda_\text{MA}|\mathcal{H}_0= \sum_{i=1}^{KN_R} \lambda_i \,\, \chi^2_{2T,i},
\end{equation}
where $\chi^2_{A(i), i}(B(i))$ for $i=1,...,KN_R$ denote independent non-central chi-square variables with $A(i)$ degrees of freedom (DoFs) and non-centrality parameter $B(i)$. Also, $\lambda_i$ is defined above \eqref{Eq:muDeltaCforProposed1} and $\mu_{i,t}$ is the $i$-th element of vector $\bm \mu_t$ in \eqref{Eq:DefineBMMu_t1}.
\end{theorem}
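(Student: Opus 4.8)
The plan is to recognize $\Lambda_\text{MA}(\mathbf{V})$ as a sum of independent Hermitian quadratic forms in Gaussian vectors, one per time sample, and then reduce each form to a weighted sum of squared real Gaussians by simultaneously whitening the noise and diagonalizing $\mathbf{A}$. First I would stack the per-frequency voltage vectors into $\mathbf{x}_t = [\mathbf{v}_{1,t}^T,\ldots,\mathbf{v}_{K,t}^T]^T \in \mathbb{C}^{KN_R}$, so that $\Lambda_\text{MA}(\mathbf{V}) = \sum_{t=1}^T \mathbf{x}_t^H \mathbf{A}\, \mathbf{x}_t$ by \eqref{Eq:logLikelihoodMatrixFormMAD1}. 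Since the noise is uncorrelated across time, the vectors $\mathbf{x}_t$ are independent; and because the likelihood in \eqref{Eq:LRT_movingAve1} treats the noise as independent across frequency samples with covariance $\mathbf{R}_k$, each $\mathbf{x}_t$ is complex Gaussian with block-diagonal covariance $\mathbf{R} = \mathrm{blockdiag}(\mathbf{R}_1,\ldots,\mathbf{R}_K)$, mean $\mathbf{0}$ under $\mathcal{H}_0$ and $\bm\alpha_t = [\bm\alpha_{1,t}^T,\ldots,\bm\alpha_{K,t}^T]^T$ under $\mathcal{H}_1$. Inspecting \eqref{Eq:ExpressionOfArsForCase(2,3)} and using that the weights are real scalars and each $\mathbf{R}_k^{-1}$ is Hermitian gives $\mathbf{A}_{rs}^H = \mathbf{A}_{sr}$, so $\mathbf{A}$ is Hermitian and each quadratic form is real.

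Next I would whiten and diagonalize in one step. Writing $\mathbf{x}_t = \mathbf{R}^{1/2}\mathbf{z}_t$, the $\mathbf{z}_t$ are complex Gaussian with identity covariance and mean $\mathbf{R}^{-1/2}\bm\alpha_t$ under $\mathcal{H}_1$ (and $\mathbf{0}$ under $\mathcal{H}_0$), and the form becomes $\mathbf{z}_t^H \mathbf{B}\, \mathbf{z}_t$ with $\mathbf{B} = \mathbf{R}^{1/2}\mathbf{A}\mathbf{R}^{1/2}$ Hermitian. The spectral decomposition $\mathbf{B} = \mathbf{U}\bm\Lambda\mathbf{U}^H$, with $\mathbf{U}$ unitary and $\bm\Lambda = \mathrm{diag}(\lambda_1,\ldots,\lambda_{KN_R})$ real, defines the coefficients $\lambda_i$ (the eigenvalues of $\mathbf{R}^{1/2}\mathbf{A}\mathbf{R}^{1/2}$, equivalently of $\mathbf{A}\mathbf{R}$, matching the definition above \eqref{Eq:muDeltaCforProposed1}). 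Setting $\mathbf{y}_t = \mathbf{U}^H\mathbf{z}_t$, the unitary invariance of the identity covariance keeps the entries of $\mathbf{y}_t$ independent and unit-variance, with mean $\bm\mu_t = \mathbf{U}^H\mathbf{R}^{-1/2}\bm\alpha_t$ under $\mathcal{H}_1$ (matching \eqref{Eq:DefineBMMu_t1}) and $\mathbf{0}$ under $\mathcal{H}_0$. This collapses the detector to $\Lambda_\text{MA}(\mathbf{V}) = \sum_{i=1}^{KN_R}\lambda_i \sum_{t=1}^T |y_{i,t}|^2$.

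Finally I would identify the law of $\sum_{t=1}^T |y_{i,t}|^2$ for each fixed $i$. Splitting each complex entry into its real and imaginary parts turns this into a sum of $2T$ independent squared real Gaussians of unit variance with means $\operatorname{Re}\{\mu_{i,t}\}$ and $\operatorname{Im}\{\mu_{i,t}\}$, hence a noncentral chi-square $\chi^2_{2T,i}$ with $2T$ DoFs and noncentrality $\sum_{t=1}^T(\operatorname{Re}\{\mu_{i,t}\}^2 + \operatorname{Im}\{\mu_{i,t}\}^2)$ under $\mathcal{H}_1$, yielding \eqref{Eq:InTheoremMAD1ForPD}; under $\mathcal{H}_0$ the means vanish and it reduces to a central $\chi^2_{2T,i}$, yielding \eqref{Eq:InTheoremMAD1ForP_FA}. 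Independence across $i$ follows since the $y_{i,t}$ are jointly independent over both indices, so $\Lambda_\text{MA}$ is exactly the stated linear combination of independent chi-squares.

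The main obstacle is performing the whitening and diagonalization consistently: the single transformation must whiten the frequency-correlated noise encoded by the block-diagonal $\mathbf{R}$ while diagonalizing the Hermitian form $\mathbf{A}$, and the complex-Gaussian variance convention must be tracked carefully so that the real/imaginary split produces precisely $2T$ unit-variance components per $i$ and the noncentrality comes out exactly as $\sum_{t=1}^T(\operatorname{Re}\{\mu_{i,t}\}^2 + \operatorname{Im}\{\mu_{i,t}\}^2)$. The edge-effect weights in \eqref{Eq:Three_casesForWin} enter only implicitly through $\mathbf{A}$, so no separate case analysis is required once the quadratic form is fixed.
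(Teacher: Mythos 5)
Your proposal is correct and follows essentially the same route as the paper: the paper likewise writes $\mathbf{v}_{k,t}=\bm\alpha_{k,t}+\mathbf{R}_k^{1/2}\tilde{\mathbf{v}}_{k,t}$, forms the block-diagonal $\widetilde{\mathbf{R}}$ (your $\mathbf{R}^{1/2}$), eigendecomposes $\widetilde{\mathbf{R}}\mathbf{A}\widetilde{\mathbf{R}}$ as $\bm\Phi\bm\Delta\bm\Phi^H$ (your $\mathbf{U}\bm\Lambda\mathbf{U}^H$), and identifies each per-eigenvalue sum over $t$ as a non-central $\chi^2_{2T}$, with the null case obtained by dropping the mean. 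Your additions---verifying that $\mathbf{A}$ is Hermitian from \eqref{Eq:ExpressionOfArsForCase(2,3)} and noting the eigenvalue equivalence with $\mathbf{A}\mathbf{R}$---are harmless refinements of the same argument.
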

\begin{proof}
Given $\mathcal{H}_1$, $\mathbf{v}_{k,t}$ in (\ref{Eq:originalDetProblem}) follows a Gaussian distribution with mean and covariance matrix $\bm \alpha_{k,t}$ and $\mathbf{R}_k$ respectively. Hence, we can express $\mathbf v_{k,t}$ as $\mathbf v_{k,t} = \bm{\alpha}_{k,t}+\mathbf{R}_k^\frac{1}{2}\tilde{\mathbf{v}}_{k,t}$ with $\tilde{\mathbf{v}}_{k,t}$ being a standard Gaussian vector. Using this expression in (\ref{Eq:logLikelihoodMatrixFormMAD1}), we can write $\Lambda_\text{MA} (\mathbf{V})$ under $\mathcal{H}_1$ as
\begin{equation}\label{Eq:DetectorH1forPD_FactroRs1}
\Lambda_\text{MA}|\mathcal{H}_1 \!\! = \!\! \sum_{t=1}^T \!\!
\begin{bmatrix}
 \tilde{\mathbf{b}}_{1,t}^H \, \tilde{\mathbf{b}}_{2,t}^H \, ... \, \tilde{\mathbf{b}}_{K,t}^H 
\end{bmatrix}
\!\! \widetilde{\mathbf{R}}
\mathbf{A}
\widetilde{\mathbf{R}}
\!\!\begin{bmatrix}
\tilde{\mathbf{b}}^T_{1,t} \, \tilde{\mathbf{b}}^T_{2,t} ... \, \tilde{\mathbf{b}}^T_{K,t}
\end{bmatrix}^T\!\!\!,
\end{equation}
where $\tilde{\mathbf{b}}_{k,t}=\mathbf{R}_k^{-\frac{1}{2}}\bm{\alpha}_{k,t}+\tilde{\mathbf{v}}_{k,t}$ and $\widetilde{\mathbf{R}}_{KN_R\times KN_R}$ is a block diagonal matrix with sub-matrices $\mathbf{R}_1^\frac{1}{2}$, $\mathbf{R}_2^\frac{1}{2}$, ..., $\mathbf{R}_K^\frac{1}{2}$ on the diagonal.
% \begin{equation}\label{Eq:defindBlockMatrixRtild}
% \widetilde{\mathbf{R}}=
% \begin{bmatrix}
%     \mathbf{R}_1^\frac{1}{2} & \dots & \mathbf{0} \\
%     \vdots & \ddots & \vdots \\
%     \mathbf{0} & \dots & \mathbf{R}_K^\frac{1}{2}
%     \end{bmatrix}.
% \end{equation}
We use the eigenvalue decomposition to replace matrix $\widetilde{\mathbf{R}}\mathbf{A}\widetilde{\mathbf{R}}$ in (\ref{Eq:DetectorH1forPD_FactroRs1}) by $\bm{\Phi}\bm{\Delta}\bm{\Phi}^H$. Here, $\bm{\Phi}$ is unitary and $\bm{\Delta}$ is a diagonal matrix containing the eigenvalues of $\widetilde{\mathbf{R}}\mathbf{A}\widetilde{\mathbf{R}}$, $\lambda_i$, where $i=1,2,...,KN_R$. Therefore, (\ref{Eq:DetectorH1forPD_FactroRs1}) is written as
\begin{equation}\label{Eq:muDeltaCforProposed1}
\Lambda_\text{MA}|\mathcal{H}_1 \!\! =\! \sum_{t=1}^T  (\bm{\mu}_t^H \!+\! \mathbf c_t^H \!)\bm{\Delta}(\bm{\mu}_t \!+\! \mathbf c_t \!)\!=\!\!\sum_{i=1}^{KN_R}\!\!\lambda_i \!\! \sum_{t=1}^T |\mu_{i,t}+c_{i,t}|^2,
\end{equation}
where 
\begin{equation}\label{Eq:DefineBMMu_t1}
\bm{\mu}_t \! = \! \bm{\Phi}^H \!\! \begin{bmatrix} \!
(\mathbf{R}_1^{-\frac{1}{2}} \bm{\alpha}_{1,t})^T \,~ (\mathbf{R}_2^{-\frac{1}{2}} \bm{\alpha}_{2,t})^T ... \, (\mathbf{R}_K^{-\frac{1}{2}} \bm{\alpha}_{K,t})^{T} \!
\end{bmatrix}^T\!\!,
\end{equation}
and $\mathbf c_t=\bm{\Phi}^H \begin{bmatrix}
\tilde{\mathbf{v}}_{1,t}^T \, \tilde{\mathbf{v}}_{2,t}^T ... \, \tilde{\mathbf{v}}_{K,t}^T
\end{bmatrix}^T$ which is a standard Gaussian vector as matrix $\bm{\Phi}$ is unitary. Also, $\mu_{i,t}$ and $c_{i,t}$ in \eqref{Eq:muDeltaCforProposed1} are respectively the $i$-th element in $\bm \mu_t$ and the $i$-th element in $\mathbf{c}_t$. 

The summation over $t$ in (\ref{Eq:muDeltaCforProposed1}) is a non-central chi-square variable with $2T$ DoFs and non-centrality parameter $\sum_{t=1}^T (\operatorname{Re}\{\mu_{i,t}\}^2+\operatorname{Im}\{\mu_{i,t}\}^2)$. Therefore, (\ref{Eq:muDeltaCforProposed1}) is a linear combination of non-central chi-squares, which is given in (\ref{Eq:InTheoremMAD1ForPD}). 

For the null hypothesis, since there is no signal component, referring to \eqref{Eq:originalDetProblem}, we can calculate $\Lambda_\text{MA}|\mathcal{H}_0$ similarly by replacing $\mathbf v_{k,t}$ in (\ref{Eq:logLikelihoodMatrixFormMAD1}) with $\mathbf{R}_k^\frac{1}{2}\tilde{\mathbf{v}}_{k,t}$. Following the same approach presented for the alternative hypothesis above, we obtain (\ref{Eq:InTheoremMAD1ForP_FA}) and this concludes our proof.
\end{proof}

Referring to \eqref{Eq:InTheoremMAD1ForPD} and \eqref{Eq:InTheoremMAD1ForP_FA}, our MA detector given the alternative and null hypotheses, are specific forms of a generalized chi-square variable, since they are respectively linear combinations of non-central chi-square or chi-square variables. We use the toolbox developed in \cite{AbhranilComputeGeneralized,Abhranil2} to generate the cumulative distribution function (CDF) of a generalized chi-square variable for the numerical results.

\section{Analysis for Constant or Rapidly-Varying Signals in Frequency}\label{Sec:SignalWithoutSlowVar}

We derive MLE-based detectors and present our analysis for constant or rapidly-varying regimes, i.e., time-varying signals that are constant or rapidly varying in the frequency domain.

\subsection{Analysis for the Constant Regime}\label{Sec:ConstantDetector}

To have a signal model that is constant in the frequency domain but it is varying in the time domain, we drop index $k$ from $\bm{\alpha}_{k,t}$ and represent the signal by $\bm{\alpha}_{t}$. Define $\bm \beta = \begin{bmatrix}
\bm \alpha_1^T \, \bm \alpha_2^T \, ... \, \bm \alpha_T^T  
\end{bmatrix}^T$. Also, define $\mathbf{M}$ as a block diagonal matrix whose diagonal blocks are $\sum_k \mathbf R^{-1}_k$ or $\mathbf{M}=\text{blockdiag}\{\sum_k \mathbf R^{-1}_k,...,\sum_k \mathbf R^{-1}_k\}$, and let $\mathbf{u}=\begin{bmatrix}(\sum_k \mathbf R^{-1}_k\mathbf
v_{k,1})^T\, (\sum_k \mathbf R^{-1}_k\mathbf
v_{k,2})^T\,...\,(\sum_k \mathbf R^{-1}_k\mathbf
v_{k,T})^T\end{bmatrix}^T$. We rewrite (\ref{Eq:logLikelihoodForProposedDet}) as
\begin{equation}\label{Eq:ConstantRegimeFirstEqu1}
\begin{split}
\Lambda_{\text{C}} (\mathbf{V})\!\!=& 2 \!\operatorname{Re}\!\Big\{\!\sum_{t=1}^T \! \bm \alpha_t^H \! \sum_{k=1}^K \! \mathbf R^{-1}_k\mathbf v_{k,t} \!\Big\}\!-\!\!\!\sum_{t=1}^T \!\bm \alpha_t^H \! \Big(\!\sum_{k=1}^K \mathbf R^{-1}_k\!\Big) \bm \alpha_t \\= & 2\operatorname{Re}\{\bm \beta^H\mathbf{u}\}-\bm \beta^H \mathbf{M} \bm \beta,
\end{split}
\end{equation}
where the subscript $\text{C}$ indicates the constant regime. Since we assume the signal is unknown, we use the GLRT approach explained in Section \ref{Sec:ProposedForAtleastOneSlow2}. To write the GLRT, we compute the MLE of $\bm{\beta}$ as $\bm \beta_\text{MLE}=\mathbf M^{-1}\mathbf{u}$. This is equivalent to maximizing the numerator in \eqref{Eq:LRT_movingAve1} over all $\bm \alpha_{t}$ for $t=1,2,...,T$. Substituting the MLE into \eqref{Eq:ConstantRegimeFirstEqu1} gives the detector as 
\begin{equation}\label{Eq:LambdaStarConstantD1}
\Lambda^*_{\text{C}}(\mathbf V)\!\!=\!\!\mathbf u^H \mathbf{M}^{-1} \mathbf{u}\!=\!\!\!\sum_{t=1}^T \!\!\bigg(\sum_{k=1}^K \!\mathbf{v}_{k,t}^H \mathbf R^{-1}_k\!\!\bigg)\! \mathbf{S} \! \bigg(\sum_{k=1}^K \! \mathbf R^{-1}_k \mathbf{v}_{k,t}\!\!\bigg),
\end{equation}
where $\mathbf S=(\sum_k \mathbf R^{-1}_k)^{-1}$. We now present our result for the constant regime in the following theorem.

\begin{theorem}\label{Theorem:TheoremForConstantRegime}
The detector $\Lambda_{\text{C}}^*(\mathbf{V})$ in (\ref{Eq:LambdaStarConstantD1}) under hypotheses $\mathcal{H}_1$ and $\mathcal{H}_0$ is respectively a non-central chi-square variable and a chi-square variable as 
\begin{equation}\label{Eq:InTheoremConstantD1ForPD}
\Lambda_{\text{C}}^*|\mathcal{H}_1= \chi^2_{2 N_R T}\bigg(\sum_{t=1}^T \sum_{j=1}^{N_R} \operatorname{Re}\{q_{j,t}\}^2+\operatorname{Im}\{q_{j,t}\}^2\bigg),
\end{equation}
\begin{equation}\label{Eq:InTheoremConstantDForPFalse}
\Lambda_{\text{C}}^*|\mathcal{H}_0= \chi^2_{2 N_R T},
\end{equation}
where $\chi^2_{A}(B)$ denotes a non-central chi-square variable with $A$ DoFs and non-centrality parameter $B$. Also, $q_{j,t}$ is the $j$-th element of vector $\mathbf q_t$ in \eqref{Eq:InTheoremTWoforCnsDHArriveAtChi2}. 
\end{theorem}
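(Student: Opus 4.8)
The plan is to follow the same whitening strategy as in the proof of Theorem~\ref{Theorem:TheoremForMAD1}, but to exploit the fact that for the constant regime the quadratic form in \eqref{Eq:LambdaStarConstantD1} is built from the \emph{exact} inverse of the relevant noise covariance, so that the generalized chi-square collapses to an ordinary (non-central) chi-square with all weights equal to one and no eigenvalue decomposition is required. First I would introduce the per-time-sample vector $\mathbf{y}_t=\sum_{k=1}^K \mathbf{R}_k^{-1}\mathbf{v}_{k,t}$, so that \eqref{Eq:LambdaStarConstantD1} reads $\Lambda_{\text{C}}^*(\mathbf{V})=\sum_{t=1}^T \mathbf{y}_t^H \mathbf{S}\,\mathbf{y}_t$, where $\mathbf{S}=(\sum_k \mathbf{R}_k^{-1})^{-1}$ as defined below \eqref{Eq:LambdaStarConstantD1}. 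Since the time samples are independent and the detector decouples across $t$, it suffices to characterize a single $\mathbf{y}_t$ and then sum the resulting chi-squares over the $T$ independent time indices.

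Next I would determine the distribution of $\mathbf{y}_t$. Writing $\mathbf{v}_{k,t}=\bm\alpha_t+\mathbf{R}_k^{1/2}\tilde{\mathbf{v}}_{k,t}$ as in the previous proof, with $\tilde{\mathbf{v}}_{k,t}$ a standard Gaussian vector independent across $k$ and $t$, gives $\mathbf{y}_t=(\sum_k \mathbf{R}_k^{-1})\bm\alpha_t+\sum_k \mathbf{R}_k^{-1/2}\tilde{\mathbf{v}}_{k,t}$ under $\mathcal{H}_1$. The key computation is the covariance: invoking independence across the frequency samples, $\operatorname{Cov}(\mathbf{y}_t)=\sum_k \mathbf{R}_k^{-1/2}\mathbf{R}_k^{-1/2}=\sum_k \mathbf{R}_k^{-1}=\mathbf{S}^{-1}$. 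Hence $\mathbf{y}_t$ is Gaussian with mean $\mathbf{S}^{-1}\bm\alpha_t$ and covariance exactly $\mathbf{S}^{-1}$ under $\mathcal{H}_1$, and with zero mean under $\mathcal{H}_0$.

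Then I would whiten. Setting $\mathbf{w}_t=\mathbf{S}^{1/2}\mathbf{y}_t$ and $\mathbf{q}_t=\mathbf{S}^{-1/2}\bm\alpha_t$ (matching the $\mathbf{q}_t$ of \eqref{Eq:InTheoremTWoforCnsDHArriveAtChi2}), the identity $\mathbf{S}^{1/2}\mathbf{S}^{-1}\mathbf{S}^{1/2}=\mathbf{I}$ shows that $\mathbf{w}_t$ has identity covariance, mean $\mathbf{q}_t$ under $\mathcal{H}_1$ and zero mean under $\mathcal{H}_0$, while $\mathbf{y}_t^H\mathbf{S}\,\mathbf{y}_t=\mathbf{w}_t^H\mathbf{w}_t$. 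Summing over the $T$ independent time indices, $\Lambda_{\text{C}}^*=\sum_{t=1}^T \mathbf{w}_t^H\mathbf{w}_t$ aggregates $N_R$ complex entries over $T$ time samples; splitting each complex entry into its real and imaginary parts, each of unit variance in the convention used in Theorem~\ref{Theorem:TheoremForMAD1}, produces $2N_R T$ squared standard real Gaussians. This is a non-central chi-square with $2N_R T$ DoFs and non-centrality parameter $\sum_{t=1}^T\sum_{j=1}^{N_R}(\operatorname{Re}\{q_{j,t}\}^2+\operatorname{Im}\{q_{j,t}\}^2)$ under $\mathcal{H}_1$, giving \eqref{Eq:InTheoremConstantD1ForPD}, and a central $\chi^2_{2N_R T}$ under $\mathcal{H}_0$, giving \eqref{Eq:InTheoremConstantDForPFalse}.

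The crux of the argument—and the step I expect to require the most care—is verifying the covariance identity $\operatorname{Cov}(\mathbf{y}_t)=\mathbf{S}^{-1}$, because it is precisely the cancellation between the weighting $\mathbf{R}_k^{-1}$ inside $\mathbf{y}_t$ and the weighting $\mathbf{S}$ in the quadratic form that forces every eigenvalue to one and yields a pure chi-square rather than the weighted (generalized) chi-square of Theorem~\ref{Theorem:TheoremForMAD1}. This cancellation relies on the noise being independent across frequency samples, which I would state explicitly, and on the complex-to-real degree-of-freedom bookkeeping being handled consistently with Theorem~\ref{Theorem:TheoremForMAD1} so that the factor of two in the DoFs and the exact form of the non-centrality parameter emerge as claimed.
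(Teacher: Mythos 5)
Your proof is correct and takes essentially the same route as the paper's: substitute the signal-plus-whitened-noise representation $\mathbf{v}_{k,t}=\bm{\alpha}+\mathbf{R}_k^{1/2}\tilde{\mathbf{v}}_{k,t}$, observe that the aggregated noise term $\sum_k \mathbf{R}_k^{-1/2}\tilde{\mathbf{v}}_{k,t}$ has covariance exactly $\mathbf{S}^{-1}$, whiten with $\mathbf{S}^{1/2}$ so the quadratic form collapses to $\sum_{t}(\mathbf{q}_t+\mathbf{w}_t)^H(\mathbf{q}_t+\mathbf{w}_t)$ with identity weights, and count $2N_RT$ real degrees of freedom — precisely the paper's argument via $\mathbf{f}_t$ and \eqref{Eq:InTheoremTWoforCnsDHArriveAtChi2}. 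The only deviation is that the paper deliberately carries a general frequency-varying signal $\bm{\alpha}_{k,t}$ through the analysis (to cover mismatched use of the detector), giving $\mathbf{q}_t=\mathbf{S}^{1/2}\sum_k\mathbf{R}_k^{-1}\bm{\alpha}_{k,t}$, of which your $\mathbf{q}_t=\mathbf{S}^{-1/2}\bm{\alpha}_t$ is the special case $\bm{\alpha}_{k,t}\equiv\bm{\alpha}_t$; your argument extends verbatim by simply keeping $\sum_k\mathbf{R}_k^{-1}\bm{\alpha}_{k,t}$ as the mean of $\mathbf{y}_t$.
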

\begin{proof}
We designed the detector for the constant regime assuming the signal is constant in frequency. In practice, detectors are utilized when the assumptions hold approximately but are not exact. Thus, to keep our analysis general, we consider a general signal model that changes in both frequency and time. Therefore, in our analysis we use the signal notation $\bm{\alpha}_{k,t}$ that includes the frequency index.

Referring to (\ref{Eq:originalDetProblem}) and given $\mathcal{H}_1$, we can replace $\mathbf v_{k,t}$ in (\ref{Eq:LambdaStarConstantD1}) with $\bm{\alpha}_{k,t}+\mathbf{R}_k^\frac{1}{2}\tilde{\mathbf{v}}_{k,t}$, where $\tilde{\mathbf{v}}_{k,t}$ is defined above \eqref{Eq:DetectorH1forPD_FactroRs1}. This gives (\ref{Eq:LambdaStarConstantD1}) as
\begin{equation}\label{Eq:InTheoremTWoforCnsDH1}
\Lambda^*_{\text{C}}|\mathcal{H}_1\!\!=\!\!\!\sum_{t=1}^T \!\!\bigg(\! \sum_{k=1}^K \bm{\alpha}_{k,t}^H \mathbf R^{-1}_k+\mathbf{f}_t^H\!\!\bigg)\! \mathbf{S} \! \bigg(\sum_{k=1}^K \! \mathbf R^{-1}_k \bm{\alpha}_{k,t}+\mathbf{f}_t\!\!\bigg),
\end{equation}
where $\mathbf{f}_t=\sum_k \mathbf R_k^{-\frac{1}{2}} \tilde{\mathbf{v}}_{k,t}$ and hence $\mathbf{f}_t$ is a Gaussian vector with zero mean and covariance matrix $\mathbf{S}^{-1}$. We replace $\mathbf{f}_t$ with $\mathbf{S}^{-\frac{1}{2}}\mathbf{w}_t$ where $\mathbf{w}_t$ denotes a standard Gaussian vector. Also, we replace $\mathbf{S}$ with $\mathbf{S}^{\frac{1}{2}}\mathbf{S}^{\frac{1}{2}}$. Thus, (\ref{Eq:InTheoremTWoforCnsDH1}) can be rewritten as
\begin{equation}\label{Eq:InTheoremTWoforCnsDHArriveAtChi2}
\Lambda_{\text{C}}^*|\mathcal{H}_1 \! =\! \sum_{t=1}^T  (\mathbf{q}_t \!+\! \mathbf w_t )^H(\mathbf{q}_t \!+\! \mathbf w_t ),
\end{equation}
where $\mathbf{q}_t=\mathbf{S}^{\frac{1}{2}}\sum_k \! \mathbf R_k^{-1} \bm{\alpha}_{k,t}$. We denote the $j$-th element of $\mathbf{q}_t$ by $q_{j,t}$. The $t$-th term in the summation in (\ref{Eq:InTheoremTWoforCnsDHArriveAtChi2}) is a non-central chi-square variable distributed as $\chi^2_{2 N_R}(\sum_{j=1}^{N_R} \operatorname{Re}\{q_{j,t}\}^2+\operatorname{Im}\{q_{j,t}\}^2)$. Consequently, (\ref{Eq:InTheoremTWoforCnsDHArriveAtChi2}) is the non-central chi-square variable given in \eqref{Eq:InTheoremConstantD1ForPD}.

To calculate $\Lambda^*_{\text{C}}(\mathbf V)$ given $\mathcal{H}_0$, we can replace $\mathbf v_{k,t}$ in (\ref{Eq:LambdaStarConstantD1}) with $\mathbf{R}_k^\frac{1}{2}\tilde{\mathbf{v}}_{k,t}$, as there is no signal component, and take a similar approach used above for $\mathcal{H}_1$. This gives \eqref{Eq:InTheoremConstantDForPFalse}.
\end{proof}

\subsection{Analysis for the Rapidly-Varying Regime}\label{Sec:BaselineWideBdetector}

For the regime with a time-varying signal that rapidly varies over frequency, we derive a detector which is an extended energy detector. Our extension takes into account correlated noise with a frequency-varying covariance matrix as well as joint frequency and time samples. Define vectors $\widehat{\bm \alpha}$ and $\widehat{\mathbf{v}}$ as
\begin{equation}\label{Eq:TwoVectorsAlphaAndVHATsforRapidR}
\begin{split}
&\widehat{\bm \alpha}\!\!=\!\!\!\begin{bmatrix}\!
\bm \alpha_{1,1}^T \, \bm \alpha_{2,1}^T \,\! ...\! \, \bm \alpha_{K,1}^T \, \bm \alpha_{1,2}^T \, \bm \alpha_{2,2}^T \,\! ... \!\, \bm \alpha_{K,2}^T\, ... \, \bm \alpha_{1,T}^T \, \bm \alpha_{2,T}^T \,\! ... \!\, \bm \alpha_{K,T}^T\!
\end{bmatrix}^{\!T},\\
&\widehat{\mathbf{v}}\!=\!\!\begin{bmatrix}\!
\mathbf{v}_{1,1}^T \, \mathbf{v}_{2,1}^T \,\! ...\! \, \mathbf{v}_{K,1}^T \, \mathbf{v}_{1,2}^T \, \mathbf{v}_{2,2}^T \,\! ... \!\, \mathbf{v}_{K,2}^T\, ... \, \mathbf{v}_{1,T}^T \, \mathbf{v}_{2,T}^T \,\! ... \!\, \mathbf{v}_{K,T}^T\!
\end{bmatrix}^{T}.
% &\widehat{\mathbf{R}}\!=\!\text{diag}\!\!\begin{bmatrix}\!
% \mathbf{R}_1\!,\!\! \, \mathbf{R}_2\!,\!\! \, ...\! \, ,\!\mathbf{R}_K\!,\!\! \, \,\! ... \!\, 
% ,\!\mathbf{R}_1\!,\!\! \, \mathbf{R}_2\!,\!\! \, ...\! \, ,\!\mathbf{R}_K\!,\!\! \, \,\! ... \!\,
% ,\!\mathbf{R}_1\!,\!\! \, \mathbf{R}_2\!,\!\! \, ...\! \, ,\!\mathbf{R}_K\!,\!\! \, \,\! ... \!\, 
% \end{bmatrix},
\end{split}
\end{equation}
In addition, define a block diagonal matrix $\widehat{\mathbf{R}}_{N_R K T \times N_R K T}$ with blocks $\mathbf{R}_1\!,\!\! \, \mathbf{R}_2,\!\! \, ...\! \, ,\!\mathbf{R}_K$ that repeat $T$ times on its diagonal. Therefore, $\mathbf {\widehat{R}}$ is given as
\begin{equation}\label{Eq:RhatForRapidRegime1}
\mathbf {\widehat{R}}=\text{blockdiag}\{\mathbf{R}_1,\mathbf{R}_2,...,\mathbf{R}_K,...,\mathbf{R}_1,\mathbf{R}_2,...,\mathbf{R}_K\}.
\end{equation}

To obtain the detector for the rapidly-varying regime, we rewrite  (\ref{Eq:logLikelihoodForProposedDet}), referring to \eqref{Eq:TwoVectorsAlphaAndVHATsforRapidR} and \eqref{Eq:RhatForRapidRegime1}, as
\begin{equation}
\Lambda_{\text{R}} \!(\mathbf{V}) \!\!=2\operatorname{Re}\{\widehat{\bm \alpha}^H\widehat{\mathbf{R}}^{-1}\widehat{\mathbf v}\}-\widehat{\bm \alpha}^H\widehat{\mathbf{R}}^{-1}\widehat{\bm \alpha}, 
\end{equation}
where subscript $\text{R}$ indicates that the detector is corresponding to the rapidly-varying regime. As the signal is unknown we use the GLRT approach and calculate the MLE of $\widehat{\bm \alpha}$ as $\widehat{\bm \alpha}_\text{MLE}=\widehat{\mathbf{R}}\widehat{\mathbf{R}}^{-1}\widehat{\mathbf{v}}=\widehat{\mathbf{v}}$, as done in Section \ref{Sec:ConstantDetector}. This gives
\begin{equation}\label{Eq:MaxLikelihoodBaselineDet1}
\Lambda_{\text{R}}^* \!(\mathbf{V}) \!\!=\widehat{\mathbf{v}}^H \widehat{\mathbf{R}}^{-1} \widehat{\mathbf{v}}=\sum_{k=1}^K \sum_{t=1}^T \mathbf{v}_{k,t}^H \mathbf{R}_k^{-1} \mathbf{v}_{k,t}.
\end{equation}

In the following theorem, we present the analysis for $\Lambda_{\text{R}}^* \!(\mathbf{V})$.

\begin{theorem}\label{Theorem:TheoremForRapidlyVaryingRegime}
The detector $\Lambda_{\text{R}}^*(\mathbf{V})$ in (\ref{Eq:MaxLikelihoodBaselineDet1}) under hypotheses $\mathcal{H}_1$ and $\mathcal{H}_0$ is respectively a non-central chi-square variable and a chi-square variable as follows 
\begin{equation}\label{Eq:BaselineDet1TheoremH_1}
\Lambda_{\text{R}}^*|\mathcal{H}_1\!\!=\!\! \chi^2_{2 N_R K T}\!\bigg(\!\sum_{t=1}^T\! \sum_{k=1}^K \!\sum_{j=1}^{N_R}\! \operatorname{Re}\{x_{j,k,t}\}^2\!+\!\operatorname{Im}\{x_{j,k,t}\}^2\!\!\bigg).
\end{equation}
\begin{equation}\label{Eq:BaselineDet1TheoremH0}
\Lambda_{\text{R}}^*|\mathcal{H}_0\!=\! \chi^2_{2 N_R K T},
\end{equation}
where $x_{j,k,t}$ is the $j$-th element of vector $\mathbf x_{k,t}$ in \eqref{Eq:BaselineDetPrrofFinalSummation}. 
\end{theorem}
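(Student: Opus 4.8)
The plan is to follow the same whitening strategy used for Theorem~\ref{Theorem:TheoremForConstantRegime}, but applied sample-by-sample, which makes the argument more direct since no projection through $\mathbf{S}$ is needed here. As in that proof, I would keep the signal notation $\bm{\alpha}_{k,t}$ general (retaining the frequency index), since the MLE step that produced \eqref{Eq:MaxLikelihoodBaselineDet1} returns $\widehat{\bm\alpha}_\text{MLE}=\widehat{\mathbf{v}}$ and therefore imposes no structural assumption on the echo. First, under $\mathcal{H}_1$ I would substitute $\mathbf{v}_{k,t}=\bm{\alpha}_{k,t}+\mathbf{R}_k^{\frac{1}{2}}\tilde{\mathbf{v}}_{k,t}$ into \eqref{Eq:MaxLikelihoodBaselineDet1}, where $\tilde{\mathbf{v}}_{k,t}$ is the standard Gaussian vector introduced above \eqref{Eq:DetectorH1forPD_FactroRs1}. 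Using $\mathbf{R}_k^{-1}=\mathbf{R}_k^{-\frac{1}{2}}\mathbf{R}_k^{-\frac{1}{2}}$, each summand factors cleanly as $\mathbf{v}_{k,t}^H\mathbf{R}_k^{-1}\mathbf{v}_{k,t}=(\mathbf{x}_{k,t}+\tilde{\mathbf{v}}_{k,t})^H(\mathbf{x}_{k,t}+\tilde{\mathbf{v}}_{k,t})$, with the whitened signal vector
\begin{equation}\label{Eq:BaselineDetPrrofFinalSummation}
\mathbf{x}_{k,t}=\mathbf{R}_k^{-\frac{1}{2}}\bm{\alpha}_{k,t},
\end{equation}
whose $j$-th element is $x_{j,k,t}$.

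Second, I would identify the distribution of a single term. Splitting each of the $N_R$ complex entries of $\mathbf{x}_{k,t}+\tilde{\mathbf{v}}_{k,t}$ into its real and imaginary parts, the quadratic form $(\mathbf{x}_{k,t}+\tilde{\mathbf{v}}_{k,t})^H(\mathbf{x}_{k,t}+\tilde{\mathbf{v}}_{k,t})$ is a sum of $2N_R$ squared Gaussians with shifted means, hence a non-central chi-square $\chi^2_{2N_R}(\sum_{j=1}^{N_R}\operatorname{Re}\{x_{j,k,t}\}^2+\operatorname{Im}\{x_{j,k,t}\}^2)$, exactly as each $t$-indexed term was treated in Theorem~\ref{Theorem:TheoremForConstantRegime}. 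The summation over $k$ and $t$ in \eqref{Eq:MaxLikelihoodBaselineDet1} then combines these terms: because $\widehat{\mathbf{R}}$ in \eqref{Eq:RhatForRapidRegime1} is block diagonal and the noise is uncorrelated in time, the vectors $\tilde{\mathbf{v}}_{k,t}$ are mutually independent across all $k$ and $t$, so the $KT$ independent non-central chi-squares add to a single non-central chi-square whose DoFs and non-centrality parameter are the respective sums, yielding \eqref{Eq:BaselineDet1TheoremH_1}. For the null case I would set $\bm{\alpha}_{k,t}=\mathbf{0}$, so $\mathbf{x}_{k,t}=\mathbf{0}$, and the same argument collapses to the central $\chi^2_{2N_R K T}$ in \eqref{Eq:BaselineDet1TheoremH0}.

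The step I expect to require the most care is justifying that the result is a \emph{single} (non-central) chi-square rather than the weighted generalized chi-square that appeared in Theorem~\ref{Theorem:TheoremForMAD1}. The conceptual reason is that the kernel of the quadratic form is precisely the inverse noise covariance $\widehat{\mathbf{R}}^{-1}$: the analogue of the matrix $\widetilde{\mathbf{R}}\mathbf{A}\widetilde{\mathbf{R}}$ from that earlier proof is here $\widehat{\mathbf{R}}^{\frac{1}{2}}\widehat{\mathbf{R}}^{-1}\widehat{\mathbf{R}}^{\frac{1}{2}}=\mathbf{I}$, all of whose eigenvalues equal one, so every constituent chi-square carries unit weight and no eigenvalue decomposition is required. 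Making this identity together with the cross-frequency and cross-time independence of $\tilde{\mathbf{v}}_{k,t}$ explicit is what turns the weighted sum into the clean chi-square claimed in the theorem.
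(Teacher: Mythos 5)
Your proposal is correct and follows essentially the same route as the paper's proof: the same substitution $\mathbf{v}_{k,t}=\bm{\alpha}_{k,t}+\mathbf{R}_k^{\frac{1}{2}}\tilde{\mathbf{v}}_{k,t}$, the same factorization $\mathbf{R}_k^{-1}=\mathbf{R}_k^{-\frac{1}{2}}\mathbf{R}_k^{-\frac{1}{2}}$ yielding $\mathbf{x}_{k,t}=\mathbf{R}_k^{-\frac{1}{2}}\bm{\alpha}_{k,t}$, per-term identification as $\chi^2_{2N_R}$ non-central chi-squares, and summation of the $KT$ independent terms (with $\bm{\alpha}_{k,t}=\mathbf{0}$ for the null case). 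Your closing observation that the quadratic-form kernel whitens to the identity, so no eigenvalue decomposition is needed, is a valid elaboration the paper leaves implicit but does not change the argument.
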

\begin{proof}
Referring to (\ref{Eq:originalDetProblem}), under $\mathcal{H}_1$ we can substitute $\bm{\alpha}_{k,t}+\mathbf{R}_k^\frac{1}{2}\tilde{\mathbf{v}}_{k,t}$ for $\mathbf v_{k,t}$ in (\ref{Eq:MaxLikelihoodBaselineDet1}), where $\tilde{\mathbf{v}}_{k,t}$ is defined above \eqref{Eq:DetectorH1forPD_FactroRs1}. We also write $\mathbf{R}_k^{-1}$ in (\ref{Eq:MaxLikelihoodBaselineDet1}) as $\mathbf{R}_k^{-\frac{1}{2}}\mathbf{R}_k^{-\frac{1}{2}}$. Thus, (\ref{Eq:MaxLikelihoodBaselineDet1}) is written as
\begin{equation}\label{Eq:BaselineDetPrrofFinalSummation}
\Lambda^*_{\text{R}}|\mathcal{H}_1\!\!=\sum_{t=1}^T \sum_{k=1}^K (\mathbf{x}_{k,t} \!+\! \tilde{\mathbf{v}}_{k,t} )^H(\mathbf{x}_{k,t} \!+\! \tilde{\mathbf{v}}_{k,t}),
\end{equation}
where $\mathbf{x}_{k,t}=\mathbf{R}_k^{-\frac{1}{2}}\bm \alpha_{k,t}$. The $j$-th element of vector $\mathbf{x}_{k,t}$ is denoted by $x_{j,k,t}$. For fixed $k$ and $t$, a term in the summation in (\ref{Eq:BaselineDetPrrofFinalSummation}) is a non-central chi-square as $\chi^2_{2 N_R}(\sum_{j=1}^{N_R} \operatorname{Re}\{x_{j,k,t}\}^2+\operatorname{Im}\{x_{j,k,t}\}^2)$. This indicates that (\ref{Eq:BaselineDetPrrofFinalSummation}) is an addition of independent non-central chi-squares, which is the non-central chi-square variable written in \eqref{Eq:BaselineDet1TheoremH_1}.

Similarly, $\Lambda_{\text{R}}^*|\mathcal{H}_0$ is calculated by substituting $\mathbf{R}_k^\frac{1}{2}\tilde{\mathbf{v}}_{k,t}$ for $\mathbf v_{k,t}$ in (\ref{Eq:MaxLikelihoodBaselineDet1}), following a similar approach explained above for the alternative hypothesis. It follows that $\Lambda_{\text{R}}^* \!(\mathbf{V})$ given $\mathcal{H}_0$ is the chi-square variable we give in \eqref{Eq:BaselineDet1TheoremH0}.
\end{proof}

\section{Upper Bound for Detection Performance}

Unlike previous sections where we assumed an unknown signal, in this section we assume the signal is known, to derive an upper bound on the target detection performance. Here, (\ref{Eq:logLikelihoodForProposedDet}) is simplified and results in the upper bound detector $\Lambda_{\text{Up.B.}} \!(\mathbf{V})=\operatorname{Re}\{\sum_{k=1}^K \sum_{t=1}^T \bm{\alpha}^H_{k,t}\mathbf{R}^{\!-1}_k\mathbf{v}_{k,t}\}$.
To obtain $\Lambda_{\text{Up.B.}} \!(\mathbf{V})$ given $\mathcal{H}_1$, we replace $\mathbf v_{k,t}$ in $\Lambda_{\text{Up.B.}} \!(\mathbf{V})$ with $\bm{\alpha}_{k,t}+\mathbf{R}_k^\frac{1}{2}\tilde{\mathbf{v}}_{k,t}$, where $\tilde{\mathbf{v}}_{k,t}$ is defined above \eqref{Eq:DetectorH1forPD_FactroRs1}. This gives
\begin{equation}\label{Eq:upperboundDetectorUnderH1}
\Lambda_{\text{Up.B.}}| \mathcal{H}_1 \!=m_u+\!\operatorname{Re}\bigg\{\!\sum_{k=1}^K\! \sum_{t=1}^T \bm{\alpha}^H_{k,t}\mathbf{R}^{\!\!-\frac{1}{2}}_k\bm \tilde{\mathbf{v}}_{k,t}\!\bigg\}\!,
\end{equation}
where the known scalar $m_u=\!\! \operatorname{Re}\!\big\{\!\!\sum_{k=1}^K \!\sum_{t=1}^T\! \bm{\alpha}^H_{k,t}\mathbf{R}^{\!-1}_k\!\bm \alpha_{k,t}\!\big\}\!$. The second term in (\ref{Eq:upperboundDetectorUnderH1}) is a real Gaussian variable with zero mean and variance $v_u=\frac{1}{2}\sum_{k=1}^K \!\sum_{t=1}^T\! \bm{\alpha}^H_{k,t}\mathbf{R}^{\!-1}_k\!\bm \alpha_{k,t}$. Hence $\Lambda_{\text{Up.B}}| \mathcal{H}_1$ follows a real Gaussian distribution with mean $m_u$ and variance $v_u$. Similarly $\Lambda_{\text{Up.B}}| \mathcal{H}_0$ is calculated by replacing $\mathbf v_{k,t}$ in $\Lambda_{\text{Up.B}} \!(\mathbf{V})$ with $\mathbf{R}_k^\frac{1}{2}\tilde{\mathbf{v}}_{k,t}$, which gives a real Gaussian distribution with zero mean and variance $v_u$. Here, it is simple to find a closed form for $P_{D,u}$ in terms of $P_{FA,u}$, where $P_{D,u}$ and $P_{FA,u}$ indicate the probabilities of detection and false alarm for the upper bound detector respectively. We obtain $P_{D,u}$ as $P_{D,u}=Q(Q^{-1}(P_{FA,u}) - m_u/\sqrt{v_u} )$, where $Q(.)$ denotes the Q-function.

\section{Numerical Results}\label{Sec:NumericalResults}

In practice, signals in frequency cannot be exactly categorized as slow, constant or rapid. Hence, assumptions cannot hold exactly and detectors $\Lambda_\text{MA}(\mathbf{V})$, $\Lambda_{\text{C}}^*(\mathbf{V})$ and $\Lambda_{\text{R}}^*(\mathbf{V})$ can be used in regimes for which they are not originally designed. Thus, we examine the detectors in terms of robustness to frequency variation. We also show the accuracy of our analysis. 

% including the extended energy detector in Section \ref{Sec:BaselineWideBdetector} as a baseline scheme. We also demonstrate the accuracy of our analysis. 

For a single time sample, we denote the average received signal-to-noise ratio (SNR) across frequency for $K$ frequency samples by $\overline{\gamma}_K$. We consider tight coupling (TC) or weak coupling (WC), indicating large or small MC respectively, at the sensing receiver that has $N_R=32$ antennas. The ratio $\frac{\delta}{a_R}$ controls the MC hence for TC and WC ratio $\frac{\delta}{a_R}$ is set to the optimal value 1.932 and double the optimal value respectively \cite{SWmassiveMIMO}. Also, $\delta$ is set to 0.5 cm and we keep the array size constant thus we change $a_R$ to have TC or WC\cite{SWmassiveMIMO}. Results are shown for echo signals received from the end-fire direction that yields the largest bandwidth for colinear TCAs \cite{SWmassiveMIMO,Bandara_2025_multiuser}. Receiving echo signals from other directions reduces performance, with the lowest performance at broadside. All circuit theory parameters including the mutual and self impedance of the Chu's CMS antennas are set similarly to those in \cite{SWmassiveMIMO}. We adopt the following model for the unknown signal: 
\begin{equation}\label{Eq:unknownSignalForSimulation}
\bm \alpha_{k,t} = \mathbf{h}_{k,t} ge^{j(k-1)\theta_k}e^{j(t-1)\theta_t}  
\end{equation}
where $\mathbf{h}_{k,t}$ indicates the frequency-varying channel from the target to the sensing receiver, which is a physically-consistent line-of-sight channel\cite{SWmassiveMIMO}. Also $g$ denotes the echo signal which is modeled as a constant source voltage and calculated based on allocating the power equally to frequency samples for each time sample. Parameters $\theta_k$ and $\theta_t$ in (\ref{Eq:unknownSignalForSimulation}) model the rate of variation in the frequency and time domains respectively and increasing them causes larger frequency and time variations. In addition, we consider a small number of time samples taken in a short amount of time, which means that $\mathbf{h}_{k,t}$ is fixed in time.
\begin{figure}[t]
    \centering
    \includegraphics[width=0.8\linewidth]{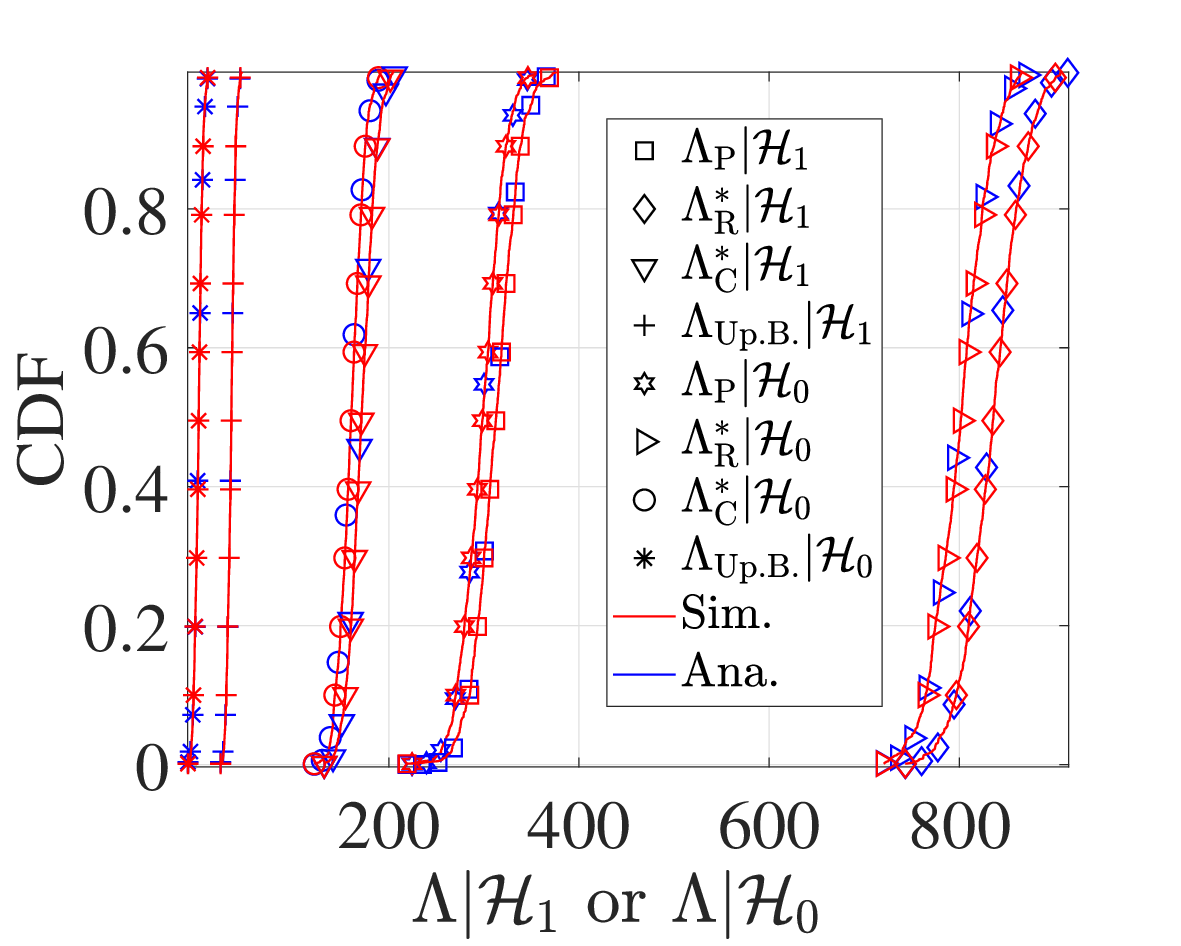}
    \caption{Distributions of the detectors for the alternative or null hypotheses for TC, bandwidth [20-30] GHz, $K=T=5$, $\theta_k$ = 0.2, $\theta_t$ = 0.2 and $\overline{\gamma}_K$ = 1.4 dB. The window length for our MA detector is 3.}\label{Fig:Distrib_CompareAnaSim}
\end{figure}
\begin{figure}[t]
    \centering
    \includegraphics[width=0.8\linewidth]{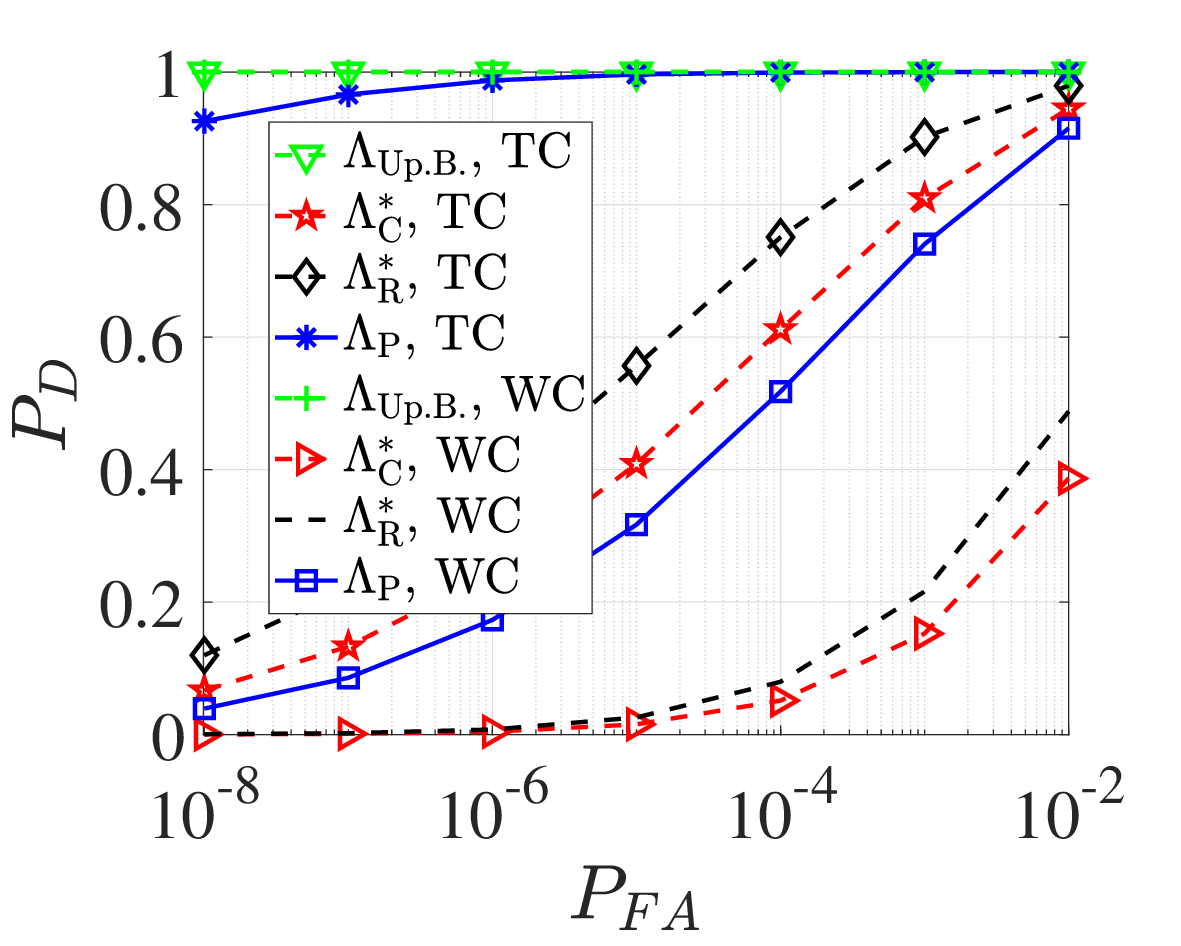}
    \caption{Probability of detection VS. probability of false alarm for arrays with TC and WC with $\theta_t=0.5$ and $\theta_k=0.1$. The bandwidth is [28-30] GHz, $K=15$ and $T=8$. $\overline{\gamma}_K$ for arrays with TC and WC is 3.7 dB and 0.8 dB respectively, for the same total power. The window length for our MA detector is 5.}\label{Fig:PD_versus_PAF1}
\end{figure}
\begin{figure}[t]
    \centering
    \includegraphics[width=0.8\linewidth]{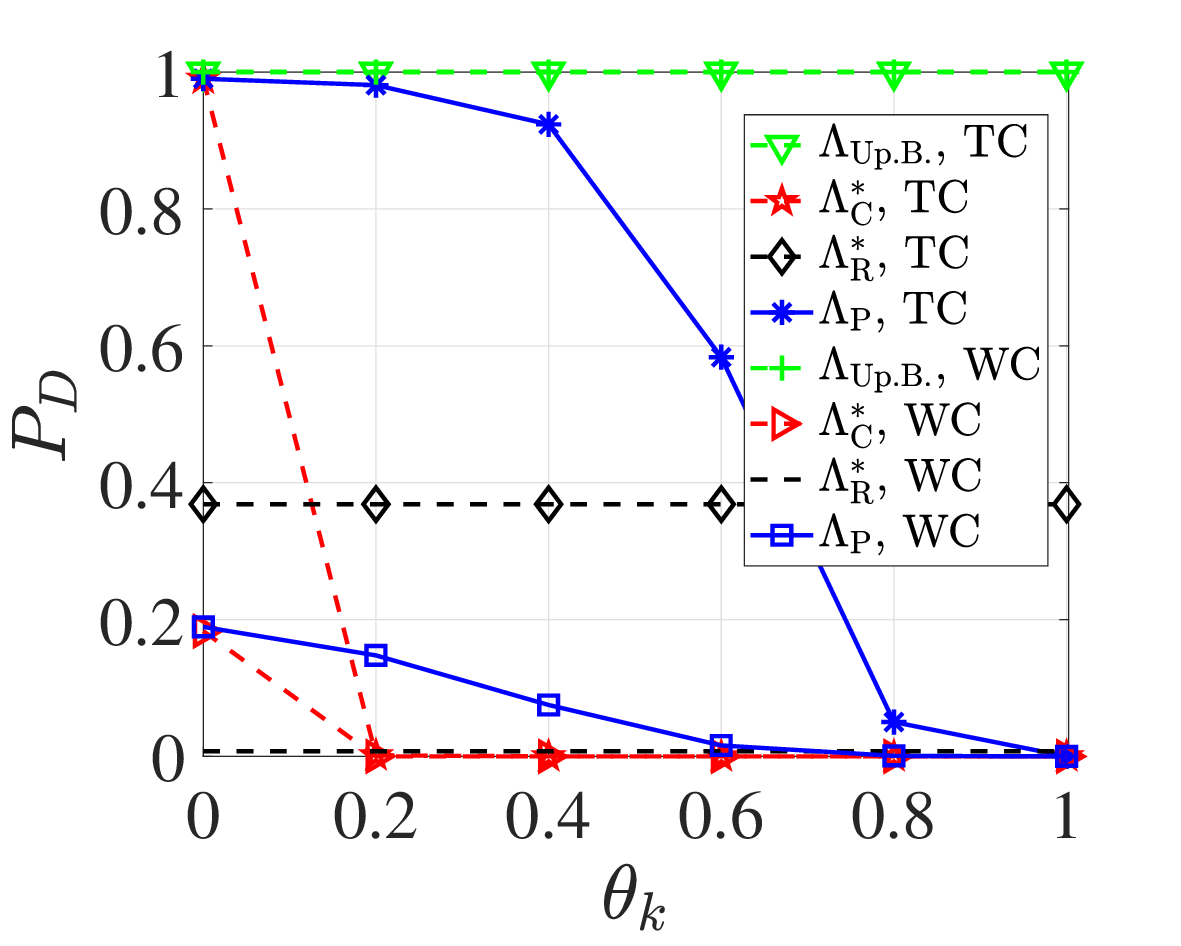}
    \caption{Probability of detection VS. rate of variation in frequency for arrays with TC and WC with $\theta_t=0.5$. The bandwidth is [28-30] GHz, $P_{FA}=10^{-6}$, $K=15$ and $T=8$. The average of $\overline{\gamma}_K$ for arrays with TC and WC is 3.7 dB and 0.8 dB respectively, for the same total power. The window length for our MA detector is 5.}\label{Fig:PD_versus_ThetaK}
\end{figure}
Fig. \ref{Fig:Distrib_CompareAnaSim} compares our analysis with simulations of the CDF of the detectors, under the alternative or null hypotheses. Blue markers in this figure represent our analysis and red solid lines with the markers represent simulations. The figure confirms the complete accuracy of our derived distributions that perfectly agree with the simulated distributions.

Fig. \ref{Fig:PD_versus_PAF1} compares detection performance of the three detectors over a wide bandwidth of 2 GHz for antenna arrays with TC and WC. The figure shows the probability of detection, $P_D$, against the probability of false alarm, $P_{FA}$, when the signal is time varying with $\theta_t=0.5$ and slowly varying in frequency with $\theta_k=0.1$. For the calculations of $P_D$ and $P_{FA}$, we refer the reader to \cite{kay1998_2ndVolume}. The total power is set so that $\Lambda_{\text{Up.B.}}$ gives $P_D=1$ indicating that the other detectors are not limited by power. The figure confirms the superior performance of our MA detector $\Lambda_\text{MA}$ compared to other detectors including the extended energy detector, $\Lambda^*_\text{R}$. The figure also shows that TC results in larger $P_D$ compared to WC. Although the total power of the echoes is the same for both the arrays with TC and WC, TC leads to a larger average SNR over the wide bandwidth, which yields better performance.

% the average of $\overline{\gamma}_K$ corresponding to the array with TC is greater than that of corresponding to the array with WC.

Fig. \ref{Fig:PD_versus_ThetaK} shows detection performance versus the rate of variation of the signal over frequency, for $\theta_t=0.5$. It can be seen that our MA detector $\Lambda_\text{MA}$ outperforms the other detectors, particularly when the signal is slowly varying in frequency. In the figure, the extended energy detector $\Lambda^*_\text{R}$ shows fixed $P_D$ since in energy detection the frequency variation of the signal is not taken into account. The channel $\mathbf{h}_{k,t}$ is almost constant in frequency over the considered 2 GHz bandwidth. As a result, for very small values of $\theta_k$ the signal is almost fixed in frequency and the detector $\Lambda^*_\text{C}$ performs well, as it is designed for a constant signal over frequency. Notably, Fig. \ref{Fig:PD_versus_ThetaK}, similar to Fig. \ref{Fig:PD_versus_PAF1}, illustrates the advantage of TC
over WC.

\section{Conclusion}
We presented analysis for target detection using TCAs, and proposed a novel detector for time-varying signals with slow variation over frequency. We conducted performance analysis for detectors designed for other regimes with constant or rapidly-varying signals. We also derived and analyzed a detector which gives an upper bound on performance. We showed that our proposed detector exhibits greater robustness to frequency variation compared to other detectors. Additionally, we highlighted that TCAs show clear superiority over weakly-coupled antennas in target detection. This is due to the larger average SNR that TCAs provide over a wide bandwidth, at the cost of greater antenna design complexity.

% We presented analysis for target detection using TCAs, considering time and frequency samples, frequency-varying noise correlation and unknown wideband echo signals. We proposed a novel detector for time-varying signals with slow variation over frequency. We derived detectors for other regimes and a detector giving upper bound performance. We also derived the distributions the detectors under the null and alternative hypotheses. We showed that our analysis perfectly match simulations and highlighted that TCAs show clear superiority over weakly-coupled antennas.

% \appendices

\bibliographystyle{IEEEtran}
\bibliography{References}
\end{document}